\documentclass[11pt]{article}

\usepackage[letterpaper,margin=1.15in]{geometry}
\usepackage{microtype}
\usepackage{amsmath,amssymb,amsthm,mathtools}
\usepackage{booktabs}
\usepackage{multirow}
\usepackage{float}
\usepackage{graphicx}
\usepackage{tikz}
\usetikzlibrary{arrows.meta,calc,positioning,shapes.geometric}
\usepackage{listings}
\usepackage{xcolor}
\definecolor{okblue}{HTML}{0072B2}
\definecolor{okvermil}{HTML}{D55E00}
\definecolor{okgreen}{HTML}{009E73}
\definecolor{codegreen}{rgb}{0,0.6,0}
\definecolor{codegray}{rgb}{0.5,0.5,0.5}
\definecolor{codepurple}{rgb}{0.58,0,0.82}
\definecolor{backcolour}{rgb}{0.96,0.96,0.94}
\lstdefinestyle{mystyle}{
  backgroundcolor=\color{backcolour},commentstyle=\color{codegreen},
  keywordstyle=\color{magenta},numberstyle=\tiny\color{codegray},
  stringstyle=\color{codepurple},basicstyle=\ttfamily\footnotesize,
  breaklines=true,captionpos=b,keepspaces=true,numbers=left,
  numbersep=5pt,showspaces=false,showstringspaces=false,tabsize=2}
\usepackage{natbib}
\usepackage[colorlinks=true,linkcolor=blue!55!black,citecolor=blue!55!black,
            urlcolor=blue!55!black]{hyperref}
\usepackage{authblk}

\theoremstyle{plain}
\newtheorem{theorem}{Theorem}
\newtheorem{lemma}[theorem]{Lemma}
\newtheorem{proposition}[theorem]{Proposition}

\theoremstyle{definition}

\theoremstyle{remark}

\newcommand{\Rhat}{\widehat R}
\newcommand{\op}{\mathrm{op}}

\title{\textbf{The Universal Warmup Path:\\[2pt]
Automatic Preconditioner Selection for HMC}}
\author{Junpeng Lao}
\date{26 July 2026}

\begin{document}
\maketitle

\begin{abstract}\noindent
Euclidean Hamiltonian Monte Carlo (HMC) warmup must choose a step size and
constant preconditioner from limited, nonstationary draws. Standard warmup
follows a fixed schedule and generally requires the preconditioner structure to
be specified in advance. We present a multi-chain controller that starts
diagonal and, at dimension-derived window endpoints, selects between diagonal
and low-rank-plus-diagonal inverse mass matrices and chooses the retained rank,
subject to dimension and sample-support caps. When evidence is inconclusive,
the controller gathers another scheduled window. Persistent
within-/between-chain disagreement means draws do not support treating one
constant preconditioner as an adequate global description; the controller
retains its within-region matrix and advises a population or tempering method
for regional exploration. Poor held-out score--position linearity advises
reparameterization. The controller selected low rank in every evaluated
headline benchmark NUTS run ($12/12$).
Geometric-mean pooled ESS-per-gradient ratios relative to the prespecified
Fisher low-rank warmup and Welford diagonal warmup baselines were respectively
$2.451$ and $22.572$ on the synthetic ill-conditioned Gaussian benchmark, and
$1.951$ and $6.264$ on the German-credit Bayesian logistic-regression
posterior; all compared runs passed the post-warmup quality check.
This method unifies common HMC warmup heuristics in one evidence-driven
controller, reducing manual choices and turning warning signals into actionable
guidance.
\end{abstract}

\section{Introduction}\label{sec:intro}

Local curvature determines how finely HMC must integrate and how efficiently
it can move within a region. Global posterior structure determines whether
those locally efficient transitions explore the target adequately
\citep{duane1987hybrid,neal2011mcmc}. Euclidean HMC therefore needs a step size
and a constant mass matrix, while NUTS also adapts trajectory length
\citep{hoffman2014nuts}. Warmup must estimate the step size and preconditioner
from limited, nonstationary draws while keeping conclusions conditioned on the
regions represented in those draws.

Standard window adaptation uses a fixed schedule and generally asks the user to
choose a diagonal or dense mass-matrix structure before execution. The
controller studied here instead begins diagonal, schedules its evidence checks
from dimension, chain count, and gradient budget, and can promote from diagonal
to a low-rank-plus-diagonal constant inverse mass matrix. Without a supported
promotion it remains diagonal. Inconclusive evidence causes it to gather
another scheduled window. Persistent within-/between-chain disagreement
retains the applicable within-region matrix and produces advice to use a
population or tempering method for regional exploration. Poor held-out
score--position linearity instead produces reparameterization advice.

Previous work by \citet{bales2019selecting} constructs and selects for deployment
diagonal, dense, and Hessian-informed low-rank constant Euclidean preconditioners
from one chain's warmup window, using an $80/20$ held-out conditioning criterion
for leapfrog stability. Our finite multi-chain within/between evidence instead
gates estimator and retained-rank eligibility, yielding another window if
inconclusive or a population handoff under persistent disagreement.

For the formal analysis, $a$ indexes candidate preconditioner estimators and
$m=1,\ldots,M$ indexes chains. Each estimator maps statistics from one current
$M$-chain window to a candidate constant inverse mass matrix, whose population
target is
\[
  G_a^\star=T_a(\pi).
\]
They use the same window but different summaries: diagonal scale, pooled
per-chain-centered within-chain information, or between-chain means. The latter
two yield low-rank corrections and may share a label while targeting different
population objects. In the theorem sections, a ``route'' means a selected
estimator. During a stable route episode, the population preconditioner map may
then attract the iterate to $G_a^\star$.

Four questions remain distinct throughout. \emph{Attraction} asks whether an
estimator's fixed population update contracts. \emph{Structural
identification} asks whether the transcript resolves a rank or subspace.
\emph{Representativeness} asks whether the transcript has missed important
regions. \emph{Utility} asks whether an identified structure improves
compute-adjusted sampling. None implies the next.
Here estimator or preconditioner identification means recovering covariance
structure, rank, or subspace from a warmup transcript; it is unrelated to
causal identification or statistical-model identifiability.

``Universal'' in the title refers to using one automatic controller across the
evaluated HMC-family kernels. Under the declared constant-preconditioner
adequacy criterion, score--position nonlinearity produces reparameterization
advice, while persistent within-/between-chain disagreement produces advice to
use a population or tempering method for regional exploration. In either case,
the selected within-region preconditioner remains available for local HMC
transitions.

Our contributions are a route-indexed attractor theorem with explicit finite-
window error terms; a Markov-transcript uncertainty construction and its
operator consequences; a finite-horizon local-transcript information bound;
and an automatic HMC warmup implementation whose diagnostic threshold tests
keep identification, consistency, and utility separate. The controlled
Gaussian-mixture study holds the marginal spectrum fixed while the recorded
within/between evidence and selected estimator change, exposing both the value
and the limits of the diagnostics. Together, the fixed-geometry benchmarks test
efficiency after selection, while the mixture sweep tests whether selection
changes with the evidence. The empirical corpus compares the controller
with prespecified Fisher low-rank warmup and Welford diagonal warmup baselines,
and with a historical shared-step implementation.

Section~\ref{sec:setup} fixes the mass-matrix convention, and
Section~\ref{sec:controller} gives the practical decision tree.
Sections~\ref{sec:attractor}--\ref{sec:transcript} develop the conditional
guarantees and limits. Section~\ref{sec:efficiency} reports the empirical
comparisons; implementation, related work, and scope follow, with proofs and
supporting studies in the appendices.

\section{Mass-matrix convention and candidate estimators}\label{sec:setup}

Let $\pi(x)\propto\exp\{\mathcal L(x)\}$ on $\mathbb R^d$. We use the
\emph{inverse-mass} convention
\begin{equation}
  p\sim\mathcal N(0,G^{-1}),\qquad
  K(p)=\tfrac12p^\top Gp.                                    \label{eq:metric}
\end{equation}
Many Bayesian inference libraries, including NumPyro and BlackJAX, use
\texttt{inverse\_mass\_\allowbreak matrix} for $G=M^{-1}$, where $M$ is the HMC mass
matrix. For $x\sim\mathcal N(\mu,\Sigma)$, Hamilton's equations give
$\ddot x=-G\Sigma^{-1}(x-\mu)$, so the covariance-matching choice is
$G^\star=\Sigma$. In this convention the inverse mass matrix equals the target
covariance for a Gaussian.
All positions, scores, covariance references, and geometry statements in this
paper use the unconstrained coordinates supplied to HMC, after any
transformation from constrained model parameters.

For a general target with finite second moments, define the posterior
covariance reference
\[
  \Sigma_\pi=\operatorname{Cov}_\pi(X).
\]
It is a common reference for the HMC preconditioners studied here, not a common
estimator target: a Fisher, masked, regularized, or finite-rank estimator need
not target $\Sigma_\pi$.

For an estimator $a$, let $G_a^\star=T_a(\pi)$ and define the affine-invariant
residual
\[
  R_a(G)=G^{-1/2}G_a^\star G^{-1/2}.
\]
The tangent from $G$ toward $G_a^\star$ is represented by
$\log R_a(G)$ in the whitened frame. This is not interchangeable with the
log-Euclidean difference $\log G_a^\star-\log G$ unless the matrices commute.
For the recursion below we work in a bounded SPD chart $\theta=\log G$; the
per-estimator moment-to-preconditioner map is part of $T_a$, not silently
identified with either matrix logarithm.

Classical window adaptation updates step size in fast and slow intervals, while
learning covariance only in expanding, memoryless slow intervals
\citep{carpenter2017stan,cmdstan239warmup}. Fisher-divergence
preconditioning supplies an estimator of a constant mass matrix, or affine
preconditioner, from warmup draws and scores in diagonal, dense, or
low-rank-plus-diagonal form \citep{seyboldt2026preconditioning}. Its
mass-matrix estimation and dual-averaging step-size adaptation are separate,
and the paper also specifies its own warmup schedule. Our controller must
decide not only which estimator to use, but when the transcript makes that
estimator eligible.

\section{Automatic HMC warmup controller and evaluation protocol}\label{sec:controller}

The evaluated implementation takes a target log density and score, dimension
$d$, chain count $M$, an HMC-family kernel, and total gradient budget
$\mathcal B$. Initial positions are supplied by the caller; between-chain
evidence is informative to the extent that these starts and the subsequent
warmup expose distinct regions. Before execution the controller constructs the
complete decision schedule from $d$, $M$, the kernel cost, and $\mathcal B$. It
starts with a diagonal inverse mass matrix while adapting step size. For NUTS it assigns
$\lfloor\mathcal B/(20M)\rfloor$ transitions to each chain, using $20$ as a
conservative gradients-per-transition allowance; a fixed-$L$ kernel uses its
declared integration length in place of $20$. The dimension-derived rank
capacity and pooled support floor are
\[
 k_{\mathrm{cap}}=\min\{50,\max(\lfloor d/2\rfloor,1)\},
 \qquad N_{\min}=8(k_{\mathrm{cap}}+1).
\]
After a one-step initialization window, the first evidence-bearing window has
$n_1=\lceil N_{\min}/M\rceil$ transitions per chain. Mass-matrix windows then have
nominal $1.5\times$ growth. When another grown window would not fit before the
step-size-only phase, the last slow window instead absorbs the remaining mass-matrix
transitions. The final $15\%$ adapts step size only. These boundaries are fixed
before sampling begins.

At each prescribed mass-matrix-window endpoint, the controller pools the
current per-chain buffer across the $M$ chains, computes within-chain and
between-chain evidence, runs the implemented conjunction of diagnostic
threshold tests, and resets the buffer. The conjunction combines the
within-chain and between-means structural tests, the $R^2$
constant-preconditioner-fixability check, and enough remaining budget for the
selected rank and step-size readaptation. These structural tests are denoted W
and T, respectively, in the implementation.
The held-out score--position linearity check asks whether the observed local
geometry is adequately approximated by a constant linear preconditioner; poor
fit motivates reparameterization. Persistent within-/between-chain
disagreement instead motivates a population or tempering method for regional
exploration. The controller treats both as action signals conditioned on the
observed warmup evidence.

A supported decision promotes the diagonal matrix to a
low-rank-plus-diagonal matrix. The within-chain estimator constructs its
low-rank correction from pooled, per-chain-centered information; the
between-means estimator uses a correction based on the chain means. If no
promotion is supported, the final choice remains diagonal. Inconclusive
evidence retains the diagonal matrix and advances to the next larger scheduled
window. Persistent within-/between-chain disagreement can instead indicate that
the draws do not support treating one constant preconditioner as adequate
across the observed regions or as evidence of global exploration. In that case
the controller retains the applicable within-region matrix and reports advice
to use a population or tempering method for regional exploration. Poor
held-out score--position linearity separately supports reparameterization advice.
Decisions are evaluated at endpoints, and every window runs to its prescribed
fixed boundary.

On completion, \nolinkurl{parameters} contains the frozen
\nolinkurl{step_size} and \nolinkurl{inverse_mass_matrix}. The controller
verdict reports \nolinkurl{route} and \nolinkurl{effective_rank}; its flags
report \nolinkurl{metric_scope}, \nolinkurl{observed_ensemble_evidence}, and
\nolinkurl{handoff}. The selected constant preconditioner remains usable for
within-region HMC transitions, while a population or tempering method can
support regional exploration. Advisory outputs are recomputed at later
endpoints and may clear while the scan continues.

Estimator, window schedule, mass-matrix-buffer policy, and step-size controller
are independent engineering choices. The Fisher-HMC reference prescription in
this paper is specifically the $30/55/15$ schedule of
\citet{seyboldt2026preconditioning}: it uses $L=10$ and then $L=80$
refresh/memory periods, periodically wipes older draws, and reinitializes step
size once at the start of phase 2. Its constant-mass-matrix estimator and
dual-averaging step-size adaptation are distinct. This description is specific
to the cited Fisher-HMC prescription. Stan's expanding slow windows are
successively doubled in the version 2.39 CmdStan User's Guide
\citep{cmdstan239warmup}.

The empirical schedule study estimates performance of complete configurations
because schedule, buffer policy, and restart behavior vary together. Its
limited reseed ablation reports the observed continuous-versus-reseed
comparison. Figure~\ref{fig:schedule} in Appendix~\ref{app:schedule} records
the schedule prescriptions and one executed trace.

\begin{figure}[tbp]\centering
\resizebox{0.98\linewidth}{!}{%
\begin{tikzpicture}[
  font=\footnotesize,
  >=Stealth,
  line/.style={->,thick,draw=black!65},
  proc/.style={draw=okblue,fill=okblue!7,rounded corners=2pt,
               align=center,minimum height=8mm,text width=6.1cm},
  decision/.style={diamond,draw=black!65,fill=black!5,aspect=2.8,
                   align=center,inner sep=1.2pt,text width=4.0cm},
  select/.style={draw=okgreen,fill=okgreen!8,rounded corners=2pt,
              align=left,minimum height=13mm,text width=4.8cm},
  gather/.style={draw=black!55,fill=black!4,rounded corners=2pt,
               align=center,minimum height=13mm,text width=3.5cm},
  unsupported/.style={draw=okvermil,fill=okvermil!8,rounded corners=2pt,
                 align=center,minimum height=12mm,text width=4.4cm},
  final/.style={draw=okblue,fill=okblue!7,rounded corners=2pt,
                align=center,minimum height=8mm,text width=4.4cm}
]
\node[proc] (inputs) {Inputs: target log density/score, $d$, $M$, HMC kernel,
  and gradient budget $\mathcal B$};
\node[proc,below=4mm of inputs] (schedule) {Set
  $k_{\rm cap}=\min\{50,\max(\lfloor d/2\rfloor,1)\}$,
  $N_{\min}=8(k_{\rm cap}+1)$, and
  $n_1=\lceil N_{\min}/M\rceil$; prescribe nominal $1.5\times$ growth and the
  final $15\%$ step-size-only phase};
\node[proc,below=4mm of schedule] (window) {Start diagonal while adapting step
  size; gather the next $M$-chain mass-matrix window};
\node[decision,below=7mm of window] (boundary) {At the prescribed endpoint:
  pool buffers, evaluate within-chain and between-means evidence, check the
  deadline, reset};
\node[select,below left=8mm and 15mm of boundary] (select) {
  \textbf{Low-rank promotion supported}\\[-2pt]
  add a within-chain or between-means\\
  low-rank correction;\\
  keep the selected low-rank estimator};
\node[gather,below=8mm of boundary] (inconclusive) {
  \textbf{Inconclusive}\\[-2pt]
  retain diagonal; gather the next prescribed window};
\node[unsupported,below right=8mm and 15mm of boundary] (unsupported) {
  \textbf{Constant-preconditioner advice}\\[-2pt]
  retain the applicable within-region matrix};
\node[unsupported,below=5mm of unsupported,text width=4.4cm] (reparam) {
  poor score--position linearity:\\
  reparameterize funnels or scale coupling};
\node[unsupported,below=5mm of reparam,text width=4.4cm] (population) {
  persistent within-/between-chain disagreement:\\
  population or tempering method for regional exploration};
\node[proc,below=49mm of inconclusive] (continue) {Advance the prescribed schedule};
\node[decision,below=7mm of continue,text width=3.3cm] (exhausted)
  {Slow phase exhausted?};
\node[final,below=7mm of exhausted] (final) {Run the prescribed step-size-only
  $15\%$; then freeze step size and inverse mass matrix};
\coordinate (looplow) at ([xshift=-8mm]select.west |- exhausted.west);
\coordinate (loopup) at ([xshift=-8mm]select.west |- window.west);

\draw[line] (inputs) -- (schedule);
\draw[line] (schedule) -- (window);
\draw[line] (window) -- (boundary);
\draw[line] (boundary) -| node[pos=0.25,above left]{supported} (select);
\draw[line] (boundary) -- node[right]{inconclusive} (inconclusive);
\draw[line] (boundary) -| node[pos=0.25,above right]{unsupported} (unsupported);
\draw[line] (unsupported) -- (reparam);
\draw[line] (unsupported.east) -- ++(5mm,0) |- (population.east);
\draw[line] (select.south) |- (continue.west);
\draw[line] (inconclusive.south) -- (continue.north);
\draw[line] (reparam.east) -- ++(8mm,0) |- ([yshift=2mm]continue.east);
\draw[line] (population.east) -- ++(12mm,0) |- ([yshift=-2mm]continue.east);
\draw[line] (continue) -- (exhausted);
\draw[line] (exhausted) -- node[right]{yes} (final);
\draw[line] (exhausted.west) -- node[above]{no} (looplow)
  -- (loopup) -- (window.west);
\end{tikzpicture}%
}
\caption{The evaluated automatic HMC warmup controller. Setup fixes every
window boundary before execution, and evidence is evaluated only at
mass-matrix-window endpoints. Every outcome continues the prescribed schedule:
a supported decision promotes to a within-chain or between-means low-rank
correction, inconclusive evidence retains the diagonal matrix for the next
window, and an advisory outcome retains the applicable within-region inverse
mass matrix and reports the advice supported by its signal. Without a supported
promotion, the final matrix remains diagonal. Exhaustion of the
mass-matrix-adapting phase enters the final step-size-only $15\%$, after which
both step size and inverse mass matrix are frozen for posterior sampling;
window boundaries remain fixed throughout.}
\label{fig:warmup-path}
\end{figure}

\section{Route-indexed attractor dynamics}\label{sec:attractor}

This section analyzes confidence-calibrated, no-demotion selection among
constant Euclidean inverse-mass estimators during HMC warmup. A ``route'' is
the selected estimator, and the formal ``latch'' is the rule that prevents
demotion during its route episode.

Let $\eta_k$ be the target-population routing object at the current preconditioner,
including the population moments and whitening transformation declared by the
candidate estimator. Let $C_k$ be its confidence set at the end of window $k$ and
$H_b$ the eligibility region for promotion to route $b$. Coverage of $C_k$ must
pay for starting-law, within-window adaptation, whitening, sampling, and
regularization errors. Routes are ordered and permit one transition.

\begin{lemma}[Simultaneous soundness of the confidence-aware latch]\label{lem:latch}
Let $\mathcal K$ be the fixed, declared set of potential decision windows.
Suppose
\[
  \Pr(\eta_k\notin C_k\mid\mathcal F_{k-1})\le\delta_k
\]
for every $k\in\mathcal K$, whether or not a promotion is ultimately attempted.
Promote to $b$ only when $C_k\subset H_b$. Then, with probability at least
$1-\sum_{k\in\mathcal K}\delta_k$, every promotion made by the latch is
supported by the true routing estimand at the time of promotion.
\end{lemma}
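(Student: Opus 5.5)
The argument is a union bound over the fixed, declared set $\mathcal K$, combined with a deterministic inclusion argument on the good event. Define the good event
\[
  E=\bigcap_{k\in\mathcal K}\{\eta_k\in C_k\}.
\]
The claim will follow once we show two things. First, $\Pr(E^c)\le\sum_{k\in\mathcal K}\delta_k$. Second, on $E$ every promotion is supported.

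For the probability bound, take unconditional expectations of the hypothesis. The tower property gives
\[
  \Pr(\eta_k\notin C_k)=\mathbb E\bigl[\Pr(\eta_k\notin C_k\mid\mathcal F_{k-1})\bigr]\le\delta_k
\]
for each $k$. Boole's inequality over the countable set $\mathcal K$ then yields $\Pr(E^c)\le\sum_k\delta_k$.

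The point to stress is that $\mathcal K$ is fixed before sampling, because the schedule is prescribed in advance as in Section~\ref{sec:controller}. The hypothesis also holds at every $k$ regardless of whether a promotion is attempted there. Together these mean the union is taken over a non-random index set. No optional-stopping or selection issue arises from the controller looking at the data to decide when to promote. A promotion window is a data-dependent index $\tau$, but the event $\{\eta_\tau\notin C_\tau\}$ is contained in $E^c$ without any additional probability cost.

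For the deterministic step, fix an outcome in $E$ and any promotion, say to route $b$ at window $k$. By the latch rule, $C_k\subset H_b$. On $E$ we have $\eta_k\in C_k$, so $\eta_k\in H_b$. This says exactly that the true routing estimand at the current preconditioner lies in the eligibility region for $b$, i.e.\ the promotion is supported. Since routes are ordered and permit one transition, and the latch forbids demotion, there are finitely many promotions, each occurring at some $k\in\mathcal K$. The argument therefore covers all of them simultaneously.

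The main obstacle is interpretive rather than technical. We must make sure that $\eta_k$ is the routing object "at the time of promotion": it is indexed by the current preconditioner, which is itself random and $\mathcal F_{k-1}$-measurable. This is why the hypothesis is phrased conditionally on $\mathcal F_{k-1}$, and the tower-property step above handles it.

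Measurability also needs a brief check. The events $\{\eta_k\in C_k\}$ and $\{C_k\subset H_b\}$ must be events, which I would take as implicit in the hypothesis. If $\mathcal K$ were instead data-dependent, the union bound would fail, and we would need a sequential (alpha-spending) argument. The statement avoids this by fixing $\mathcal K$.
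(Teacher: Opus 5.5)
Your proposal is correct and matches the paper's proof: the deterministic inclusion $\eta_k\in C_k\subset H_b$ on the good event, combined with iterated expectation and a union bound over the fixed, pre-declared window set $\mathcal K$. As in the paper, that bound is taken before seeing which windows trigger a promotion, which is the point of your remark about the data-dependent promotion index.
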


The lemma supports the declared target-population estimand at promotion time;
persistence and global target coverage are separate questions. Repeated-window
use requires confidence spending or an anytime-valid construction rather than
reusing a fixed-window Wald set.

After promotion, consider a route episode for estimator $a$. Let $s(\theta)$ be its
population routing statistic, $g$ the router, $h_a(\theta)$ the population
update field, and $\theta_a^\star=\log G_a^\star$. Write
\begin{equation}
  F_a(\theta)=\theta+\alpha h_a(\theta),\qquad
  \theta_{k+1}=\widehat F_{k,a}(\theta_k).                    \label{eq:update}
\end{equation}

\begin{theorem}[Local robust attractor of a route episode]\label{thm:attractor}
Let $\Theta$ be the bounded log-SPD chart used by the controller. Assume
$R_0>0$, $\theta_a^\star\in\Theta$, and
$\overline B_F(\theta_a^\star,R_0)\subset\Theta$, where
$\overline B_F$ denotes a closed Frobenius-norm ball. Assume also that
$s(\theta_a^\star)\in\operatorname{int}(\mathcal S_a)$, where
$\mathcal S_a=\{s:g(s)=a\}$. Suppose $h_a(\theta_a^\star)=0$ and, on
$\overline B_F(\theta_a^\star,R_0)$, for constants $\mu>0$ and $L_h>0$,
\[
 \langle\theta-\theta_a^\star,h_a(\theta)\rangle_F
 \le-\mu\|\theta-\theta_a^\star\|_F^2,\qquad
 \|h_a(\theta)\|_F\le L_h\|\theta-\theta_a^\star\|_F.
\]
Let $0<\alpha<2\mu/L_h^2$ and
$q=(1-2\alpha\mu+\alpha^2L_h^2)^{1/2}<1$. For every $k<K$, suppose
candidate statistics and updates are constructed before routing, and define
the joint error event
\[
\begin{aligned}
\mathcal E_k={}&
 \{\|\widehat s_k-s(\theta_k)\|_{\mathcal S}>r_k^s\}\\
 &{}\cup
 \{\|\widehat F_{k,a}(\theta_k)-F_a(\theta_k)\|_F>e_k\},
 \qquad
 \Pr(\mathcal E_k\mid\mathcal F_{k-1})\le\delta_k .
\end{aligned}
\]
Here $\|\cdot\|_{\mathcal S}$ is the declared statistic-space norm. Let $s$ be
$L_s$-Lipschitz from the ball to that norm, let
$\kappa=\operatorname{dist}\{s(\theta_a^\star),
\partial\mathcal S_a\}>0$, with distance in the same norm. Subject to the
ordered latch, the observable rule selects or promotes estimator $a$
whenever the following inclusion holds, and otherwise abstains:
\[
\overline B_{\mathcal S}(\widehat s_k,r_k^s)\subset\mathcal S_a.
\]
For every $k<K$, assume $L_sR_0+2r_k^s<\kappa$ and
$e_k\le(1-q)R_0$; the derived margin is sufficient for the observable
inclusion on the good event.
If $\|\theta_0-\theta_a^\star\|_F\le R_0$ and the latch is unset or already
correctly set to $a$, then on the simultaneous good event the router selects
$a$, the iterates remain in the ball, and
\begin{equation}
 \|\theta_K-\theta_a^\star\|_F
 \le q^K\|\theta_0-\theta_a^\star\|_F
     +\sum_{j=0}^{K-1}q^{K-1-j}e_j.                          \label{eq:contract}
\end{equation}
This event has probability at least $1-\sum_{k<K}\delta_k$.
\end{theorem}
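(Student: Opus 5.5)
The proof combines a one-step contraction estimate for the population map $F_a$ with an induction over windows that runs on the simultaneous good event $\mathcal G=\bigcap_{k<K}\mathcal E_k^c$. The probability statement comes first and is routine. By a union bound with the tower property,
\[
\Pr(\mathcal G^c)\le\sum_{k<K}\Pr(\mathcal E_k)=\sum_{k<K}\mathbb E\,\Pr(\mathcal E_k\mid\mathcal F_{k-1})\le\sum_{k<K}\delta_k .
\]
This is valid because candidate statistics and updates are built before routing, so each $\mathcal E_k$ is defined whichever route is taken. This is the same accounting as in Lemma~\ref{lem:latch}.

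Next comes the deterministic contraction. For $\theta\in\overline B_F(\theta_a^\star,R_0)$ write $\Delta=\theta-\theta_a^\star$. Expanding gives
\[
\|F_a(\theta)-\theta_a^\star\|_F^2=\|\Delta\|_F^2+2\alpha\langle\Delta,h_a(\theta)\rangle_F+\alpha^2\|h_a(\theta)\|_F^2 .
\]
The two assumed inequalities bound the right side by $(1-2\alpha\mu+\alpha^2L_h^2)\|\Delta\|_F^2=q^2\|\Delta\|_F^2$. The condition $0<\alpha<2\mu/L_h^2$ gives $q<1$, and $q$ is real because $\mu\le L_h$ by Cauchy--Schwarz. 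So $F_a$ maps the ball into the ball of radius $q\|\Delta\|_F$.

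The induction hypothesis at step $k$ is that $\|\theta_k-\theta_a^\star\|_F\le R_0$ and the latch is unset or set to $a$. The routing step is where the margin conditions are used. By the Lipschitz bound, $\|s(\theta_k)-s(\theta_a^\star)\|_{\mathcal S}\le L_sR_0$. On $\mathcal E_k^c$ we also have $\|\widehat s_k-s(\theta_k)\|_{\mathcal S}\le r_k^s$. Hence every point of $\overline B_{\mathcal S}(\widehat s_k,r_k^s)$ lies within $L_sR_0+2r_k^s<\kappa$ of $s(\theta_a^\star)$. Since $\kappa$ is the distance from $s(\theta_a^\star)$ to $\partial\mathcal S_a$ and $s(\theta_a^\star)$ is interior, a connectedness argument applies: the segment from $s(\theta_a^\star)$ to any such point cannot cross the boundary. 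So the ball lies in $\mathcal S_a$, the observable inclusion holds, and the rule selects $a$. The ordered latch does not block this, because the latch is unset (promotion is allowed) or already set to $a$ (no demotion is needed). The latch therefore remains correctly set to $a$.

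The update step follows. Since $a$ is selected, $\theta_{k+1}=\widehat F_{k,a}(\theta_k)$. The triangle inequality together with $\mathcal E_k^c$ gives
\[
\|\theta_{k+1}-\theta_a^\star\|_F\le q\|\theta_k-\theta_a^\star\|_F+e_k .
\]
This is at most $qR_0+(1-q)R_0=R_0$, which closes the induction, and Proposition-level invariance in $\Theta$ follows from $\overline B_F(\theta_a^\star,R_0)\subset\Theta$. Unrolling the one-step recursion over $k=0,\dots,K-1$ gives \eqref{eq:contract}.

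I expect the main obstacle to be the routing step, not the algebra. It needs the geometric claim that a $\kappa$-margin from $\partial\mathcal S_a$ around an interior point forces the whole confidence ball into $\mathcal S_a$. One must also check that conditioning on $\mathcal F_{k-1}$ stays legitimate even though the route itself depends on earlier windows. I would handle the first with the segment/connectedness argument just given. I would handle the second by insisting that $\mathcal E_k$ is defined for the candidate estimator $a$ regardless of the route, which is exactly the hypothesis that statistics are constructed before routing.
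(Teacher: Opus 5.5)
Your proposal is correct and follows the paper's proof essentially step for step: the one-step contraction $\|d_k+\alpha h_a(\theta_k)\|_F\le q\|d_k\|_F$, the margin bound $\|z-s(\theta_a^\star)\|_{\mathcal S}\le 2r_k^s+L_sR_0<\kappa$ that forces the observable inclusion, the triangle-inequality recursion closed by $e_k\le(1-q)R_0$, unrolling, and a conditional union bound over error events defined before routing. You also supply two details the paper leaves implicit: the connectedness argument behind the interior claim, and the check that $q$ is real because $\mu\le L_h$.
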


The finite-window update-error budget is deliberately explicit:
\begin{equation}
 e_k\le e_k^{\rm start}+e_k^{\rm step}+e_k^{\rm white}
          +e_k^{\rm sample}+e_k^{\rm reg}.                   \label{eq:errors}
\end{equation}
These terms cover the starting law, adaptive step size, estimated whitening,
sampling fluctuation, and regularization. Controlled-Markov stability,
convergence, and ergodicity results provide asymptotic ingredients under
conditions such as drift, containment, and diminishing adaptation; they do not
themselves provide route-specific finite-window high-probability bounds for
$e_k^{\rm start}$ or $e_k^{\rm step}$. Invariant kernels alone do not remove
transient or state-dependent adaptation bias
\citep{andrieu2005stability,fort2016convergence,roberts2007coupling}.
This theory therefore analyzes a confidence-calibrated estimator-selection
rule. The evaluated controller uses fixed diagnostic thresholds; connecting it
to the theorem requires calibration of the five component bounds in
\eqref{eq:errors}.

\section{Uncertainty and safe structural decisions}\label{sec:ident}

\subsection{Posterior covariance and estimator identification}\label{sec:covbridge}

For $M$ equal-length chains with $n$ recorded states, write $\bar x_m$ for
chain $m$'s mean and $\bar x$ for the grand mean, and define
\[
 W=\frac{1}{M(n-1)}\sum_{m=1}^M\sum_{t=1}^n
 (x_{mt}-\bar x_m)(x_{mt}-\bar x_m)^\top,
\]
\[
 B=\frac{n}{M-1}\sum_{m=1}^M
 (\bar x_m-\bar x)(\bar x_m-\bar x)^\top.
\]
If $S$ is the ordinary covariance of all $Mn$ states about $\bar x$, then
\begin{equation}
 (Mn-1)S=M(n-1)W+(M-1)B.                                  \label{eq:anova}
\end{equation}
This finite-transcript identity is exact: it requires neither independence nor
stationarity. It remains exact after one common realized linear projection
$A$, with $S,W,B$ replaced by $ASA^\top,AWA^\top,ABA^\top$. A projection
selected from the same data is still algebraically descriptive, but calibrated
post-selection inference requires cross-fitting, selective inference, or a
simultaneous operator bound. Separately whitening $S$, $W$, and $B$ does not
preserve \eqref{eq:anova}.

Now hold $M$ fixed and let $n\to\infty$. Suppose each chain's empirical first
and second moments converge to those of a law $\nu_m$. Then $S$ converges to
the covariance of the equal aggregate mixture
\[
 \bar\nu=\frac1M\sum_{m=1}^M\nu_m.
\]
This limit identifies $\Sigma_\pi$ only when $\bar\nu$ is second-moment
representative of $\pi$. Agreement among chains is not enough: a clean ensemble
confined to one basin can converge to the wrong regional covariance.

\begin{figure}[tbp]\centering
\resizebox{0.98\linewidth}{!}{%
\begin{tikzpicture}[
  font=\footnotesize,
  >=Stealth,
  edge/.style={->,thick,draw=black!60},
  root/.style={draw=okblue,fill=okblue!7,rounded corners=2pt,
               align=center,minimum height=7mm,text width=4.5cm},
  outcome/.style={draw=black!55,fill=black!4,rounded corners=2pt,
                 align=center,minimum height=7mm,text width=3.7cm},
  selected/.style={draw=okgreen,fill=okgreen!8,rounded corners=2pt,
                align=center,minimum height=12mm,text width=4.25cm},
  regional/.style={draw=okvermil,fill=okvermil!8,rounded corners=2pt,
                   align=center,minimum height=12mm,text width=4.25cm}
]
\node[root] (sigma) {posterior covariance reference $\Sigma_\pi$};
\node[root,below=5mm of sigma] (split) {observed within/between split};
\node[outcome,below left=9mm and 42mm of split] (negligible)
  {between negligible};
\node[outcome,below=9mm of split] (transient) {transient};
\node[outcome,below right=9mm and 42mm of split] (persistent)
  {persistent regional};
\node[selected,below=5mm of negligible] (deploy)
  {use $W$ as the current within-chain estimate\\
   for a supported constant preconditioner\\
   {\scriptsize conditional on the observed transcript}};
\node[outcome,below=5mm of transient] (wait) {gather another scheduled window};
\node[regional,below=5mm of persistent] (ensemble)
  {retain the within-region preconditioner\\
   use a population or tempering method for regional exploration\\
   {\scriptsize population/regional companion, Paper 3}};

\draw[edge] (sigma) -- (split);
\draw[edge] (split) -- (negligible);
\draw[edge] (split) -- (transient);
\draw[edge] (split) -- (persistent);
\draw[edge] (negligible) -- (deploy);
\draw[edge] (transient) -- (wait);
\draw[edge] (persistent) -- (ensemble);
\end{tikzpicture}%
}
\caption{Practical interpretation of the posterior covariance reference through
the observed within/between split. The leaves condition use of a constant
Euclidean preconditioner on the observed transcript and identify the next
action.}
\label{fig:covariance-tree}
\end{figure}

\begin{proposition}[When covariance identifies an estimator]\label{prop:fibre}
Let $\mathcal P$ be a declared target class with finite covariance. An estimator
functional $T_a$ is identified by covariance on $\mathcal P$ if and only if it
is constant on covariance fibres:
\[
 \operatorname{Cov}_{\pi_0}(X)=\operatorname{Cov}_{\pi_1}(X)
 \quad\Longrightarrow\quad T_a(\pi_0)=T_a(\pi_1)
 \qquad(\pi_0,\pi_1\in\mathcal P).
\]
Equivalently, there is a map $F_a$ on the covariance range of $\mathcal P$ such
that $T_a(\pi)=F_a(\Sigma_\pi)$. This condition preserves the estimator-indexed
target $G_a^\star=T_a(\pi)$ rather than replacing it by one common
preconditioner.
\end{proposition}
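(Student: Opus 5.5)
The statement is essentially the universal property of a quotient: a function factors through a surjection exactly when it is constant on the fibres of that surjection. I would first make the definition precise. Say \emph{$T_a$ is identified by covariance on $\mathcal P$} when $T_a(\pi)$ is determined by $\Sigma_\pi$ alone, that is, when the displayed fibre condition holds. The content of the proposition is then the equivalence of this condition with the factorization $T_a=F_a\circ C$, where $C:\mathcal P\to\mathcal C:=\{\Sigma_\pi:\pi\in\mathcal P\}$ is the covariance map $C(\pi)=\operatorname{Cov}_\pi(X)$. This map is well defined because every $\pi\in\mathcal P$ has finite covariance, and it is surjective onto $\mathcal C$ by construction.

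For $(\Leftarrow)$, that factorization implies fibre constancy, suppose $T_a(\pi)=F_a(\Sigma_\pi)$ and $\Sigma_{\pi_0}=\Sigma_{\pi_1}$. Then $T_a(\pi_0)=F_a(\Sigma_{\pi_0})=F_a(\Sigma_{\pi_1})=T_a(\pi_1)$.

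For $(\Rightarrow)$, fibre constancy implies factorization. For each $\Sigma\in\mathcal C$, the fibre $C^{-1}(\Sigma)\subset\mathcal P$ is nonempty by surjectivity. Define $F_a(\Sigma):=T_a(\pi)$ for any $\pi$ in that fibre. Fibre constancy says exactly that this value does not depend on which $\pi$ is chosen, so $F_a$ is a well-defined map on $\mathcal C$. It satisfies $F_a(\Sigma_\pi)=T_a(\pi)$ for every $\pi\in\mathcal P$. It is also unique on $\mathcal C$, since any two such maps agree at every point $\Sigma_\pi$. Only the nonemptiness of the fibres is used in choosing a representative, so no choice principle beyond this is needed.

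The last sentence of the proposition is interpretive. I would check that the construction never replaces the estimator-indexed target: $F_a$ depends on $a$, and $G_a^\star=T_a(\pi)=F_a(\Sigma_\pi)$ holds pointwise. Different estimators $a\neq b$ may therefore have $F_a\neq F_b$ on the same $\Sigma_\pi$; the Fisher, masked and finite-rank targets from Section~\ref{sec:setup} are examples. Nothing forces $F_a=\mathrm{id}$, i.e.\ $G_a^\star=\Sigma_\pi$.

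I expect no real obstacle. The one point needing care is declaring the domain of $F_a$ as the covariance range $\mathcal C$ rather than all SPD matrices, because off $\mathcal C$ no value is determined. If desired, I would also note that on a class $\mathcal P$ with differing non-Gaussian features, a score-based Fisher functional can fail fibre constancy. That shows the condition is substantive, but it is not needed for the proof.
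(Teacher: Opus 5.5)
Your proposal is correct and matches the paper's proof. Factorization $T_a=F_a\circ\operatorname{Cov}$ gives fibre constancy immediately, and conversely $F_a(\Sigma)$ is defined as $T_a(\pi)$ for any $\pi\in\mathcal P$ in the fibre, with fibre constancy making it independent of the representative; your remarks on restricting $F_a$ to the covariance range and on its uniqueness are accurate refinements the paper leaves implicit.
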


The Fisher estimator gives a useful full-rank special case. Let
$s(x)=\nabla\log\pi(x)$ and
$J_\pi=E_\pi\{s(X)s(X)^\top\}$.

\begin{proposition}[Full-rank Fisher covariance bound]\label{prop:fisher}
Suppose $\pi$ has a smooth positive density, finite SPD covariance
$\Sigma_\pi$, square-integrable score, and Stein boundary conditions. Then
\[
 E_\pi\{(X-\mu)s(X)^\top\}=-I,\qquad
 J_\pi\succeq\Sigma_\pi^{-1}.
\]
For the unregularized, no-cutoff, full-rank affine-invariant Fisher population
functional, with $\#$ denoting the SPD geometric mean,
\[
 G_F=\Sigma_\pi\#J_\pi^{-1},\qquad
 \Sigma_\pi^{-1/2}G_F\Sigma_\pi^{-1/2}
 =(\Sigma_\pi^{1/2}J_\pi\Sigma_\pi^{1/2})^{-1/2}\preceq I.
\]
Equality holds precisely in the affine-score, hence Gaussian, case under these
assumptions \citep{dembo1991information}.
\end{proposition}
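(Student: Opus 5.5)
Three steps: a Stein identity, a Cramér--Rao-type matrix inequality, and then a geometric-mean computation.

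\emph{Stein identity.} For each coordinate $j$, integrate $\partial_j\{(x_i-\mu_i)\pi(x)\}$ over $\mathbb R^d$. The Stein boundary conditions make this integral vanish. Since $\pi\,s=\nabla\pi$, the vanishing reads
\[
 \delta_{ij}+E_\pi\{(X_i-\mu_i)s_j(X)\}=0,
\]
which is $E_\pi\{(X-\mu)s(X)^\top\}=-I$. Finite covariance and a square-integrable score make every term integrable, by Cauchy--Schwarz.

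\emph{Matrix inequality.} Form the joint second-moment matrix of the stacked vector $(X-\mu,\,s(X))$:
\[
 \begin{pmatrix}\Sigma_\pi & -I\\ -I & J_\pi\end{pmatrix}\succeq 0 .
\]
Note $E_\pi s=0$, again by the Stein identity with a constant test function. Because $\Sigma_\pi\succ0$, the Schur complement gives $J_\pi-\Sigma_\pi^{-1}\succeq0$.

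\emph{Equality case.} Equality in the Schur complement means the residual $s(X)+\Sigma_\pi^{-1}(X-\mu)$ has zero second moment. Hence $s(x)=-\Sigma_\pi^{-1}(x-\mu)$ almost everywhere, so the score is affine. Integrating this gradient, using the smooth positive density, gives $\log\pi$ quadratic, so $\pi$ is Gaussian. Conversely, a Gaussian gives $J_\pi=\Sigma_\pi^{-1}$. This matches the cited \citet{dembo1991information} statement.

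\emph{Geometric-mean step.} I take the unregularized full-rank affine-invariant Fisher functional to be the Riemannian midpoint. That is, it is the unique SPD $G$ solving $G\,J_\pi\,G=\Sigma_\pi$: the minimizer of $\operatorname{tr}(G^{-1}\Sigma_\pi)+\operatorname{tr}(GJ_\pi)$, whose stationarity condition is exactly this Riccati equation. I will state this identification, as in \citet{seyboldt2026preconditioning}; if the paper's convention differs, this is where the argument must be adjusted. The Riccati solution is the geometric mean $G_F=\Sigma_\pi\#J_\pi^{-1}$.

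\emph{Whitened formula.} Whiten with $\Sigma_\pi^{-1/2}$: set $\widetilde G=\Sigma_\pi^{-1/2}G_F\Sigma_\pi^{-1/2}$ and $\widetilde J=\Sigma_\pi^{1/2}J_\pi\Sigma_\pi^{1/2}$. Congruence invariance of $\#$ gives $\widetilde G=I\#\widetilde J^{-1}=\widetilde J^{-1/2}$. The previous step gives $\widetilde J\succeq I$. Operator monotonicity of $t\mapsto t^{-1/2}$ (decreasing) then yields $\widetilde G\preceq I$. Equality holds iff $\widetilde J=I$, i.e.\ iff $J_\pi=\Sigma_\pi^{-1}$, which is the Gaussian case above.

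The main obstacle is the regularity needed in the integration by parts: the boundary terms must vanish and the products must be integrable. These are absorbed into the hypothesis ``Stein boundary conditions'', which I will make explicit as $\lim_{R\to\infty}\int_{\|x\|=R}|x_i|\pi\,dS=0$ plus integrability of $(X-\mu)s^\top$. The second delicate point is justifying the geometric-mean characterization of the Fisher functional rather than assuming it.
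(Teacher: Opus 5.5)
Your proposal is correct and follows the paper's proof: integration by parts gives the Stein identity, the Schur complement of the second-moment matrix of $(X-\mu,s(X))$ gives the bound, the equality case is the affine score, and the whitened formula is a geometric-mean computation. Your Riccati (Fisher-divergence stationarity) characterization of $G_F$ and the congruence-invariance step simply spell out what the paper compresses into ``the definition of the SPD geometric mean.'' One small fix: the surface term is bounded by $\int_{\|x\|=R}|x_i-\mu_i|\pi\,dS$, not by $\int_{\|x\|=R}|x_i|\pi\,dS$, and since $\int_0^\infty\int_{\|x\|=R}|x_i-\mu_i|\pi\,dS\,dR=E_\pi|X_i-\mu_i|<\infty$, that term already vanishes along a sequence of radii, so finite covariance and a square-integrable score make the Stein identity hold without any extra boundary hypothesis.
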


This proposition does not identify the deployed finite-rank, masked,
regularized Fisher estimator from covariance. For example, a standard Gaussian and
a variance-one Student-$t_\nu$ law have the same covariance, whereas their
one-dimensional location Fisher informations are respectively $1$ and
$\nu(\nu+1)/\{(\nu-2)(\nu+3)\}>1$ for $\nu>2$. Their full-rank Fisher
functionals therefore differ.

\subsection{A per-estimator Markov central limit theorem}

For estimator $a$, collect its quadratic and score observables in a vector
$f_a(X_t)$, with mean $\eta_a$. For a stationary,
geometrically ergodic frozen-kernel chain with sufficient moments, the
multivariate Markov-chain CLT gives
\begin{equation}
 \sqrt n(\widehat\eta_a-\eta_a)
 \Rightarrow\mathcal N(0,\Omega_a),\qquad
 \Omega_a=\sum_{h=-\infty}^{\infty}
 \operatorname{Cov}_\pi\{f_a(X_0),f_a(X_h)\}.                 \label{eq:mclt}
\end{equation}
The full long-run covariance (LRV) depends on the functionals in $f_a$; a
single scalar ESS cannot be substituted into an iid Wishart law
\citep{jones2004markov,vats2018strong}.
If $\phi_a$ maps moments to a symmetric residual, let $J_a$ be the derivative
of $\operatorname{vech}\circ\phi_a$ and let
\[
 u=\operatorname{vech}(\widehat R_a-R_a),
\]
where $\operatorname{vech}$ retains one copy of each off-diagonal. The delta
method gives root-$n$ limit covariance
$\Gamma_a=J_a\Omega_aJ_a^\top$ for $\sqrt n\,u$. Let $Q$ be the oracle
population support: its orthonormal columns span
$\operatorname{range}(\Gamma_a)$. Set $z=Q^\top u$ and require
$p=\operatorname{rank}(\Gamma_a)\ge1$. If $\Gamma_a$ is singular, assume
$(I-QQ^\top)u=o_p(n^{-1/2})$, or that the residual lies exactly on the declared
support. Require
$Q^\top\widehat\Gamma_aQ$ to be positive definite and consistent on this
support. For one fixed window, an illustrative Wald construction is
\begin{equation}
 r_{\alpha,n}=
 \left\{\frac{2\chi^2_{p,1-\alpha}
 \lambda_{\max}(Q^\top\widehat\Gamma_aQ)}{n}\right\}^{1/2}.   \label{eq:radius}
\end{equation}
Indeed, for symmetric $A$,
$\|A\|_F\le\sqrt{2}\|\operatorname{vech}(A)\|_2$, which accounts
for the factor $2$ inside the squared radius and yields a Frobenius, hence
operator, radius asymptotically. The construction requires a consistent
multivariate LRV estimator and regularity for the estimator map. Estimating $Q$ or
its rank from the same window requires a separate justification. With $M$
independent frozen chains, their contributions may scale as $1/M$; chains
coupled through shared adaptation are not automatically independent.
Non-asymptotic matrix concentration is another option only under explicit
boundedness, tail, and dependence assumptions \citep{neeman2024concentration}.

\begin{theorem}[Consequences of an operator-confidence event]\label{thm:operator}
Let $R_a$ and $\widehat R_a$ be symmetric population and estimated residuals,
with eigenvalues in descending order, and suppose
$\|\widehat R_a-R_a\|_{\op}\le r$.
\begin{enumerate}
\item Weyl's inequality gives
  $|\widehat\lambda_j-\lambda_j|\le r$ for every $j$.
\item For a declared population cutoff $\tau$ and $1\le k<d$, if
  $\widehat\lambda_k-r>\tau$ and
  $\widehat\lambda_{k+1}+r<\tau$, then exactly $k$ population eigenvalues
  exceed $\tau$. Otherwise this interval test is inconclusive.
\item Let $V$ and $\widehat V$ span the matched population and estimated top-$k$
  eigenspaces (or a declared isolated eigenvalue cluster), and let $\Delta$ be
  the minimum population separation between that cluster and its complement.
  If $\Delta>2r$, then
  $\|\sin\Theta(\widehat V,V)\|_{\op}\le2r/\Delta$.
\item If $\lambda_{\min}(R_a)\ge m>r$, then
  \[
    \|\log\widehat R_a-\log R_a\|_{\op}
       \le \log\!\left(\frac m{m-r}\right)\le\frac r{m-r}.
  \]
\end{enumerate}
A preconditioner-error claim additionally requires a per-estimator
perturbation bound for the moment-to-preconditioner map $T_a$.
\end{theorem}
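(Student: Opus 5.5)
The plan is to prove the four items in order, each as a deterministic consequence of the single event $\|\widehat R_a-R_a\|_{\op}\le r$. Set $E=\widehat R_a-R_a$, a symmetric matrix with $\|E\|_{\op}\le r$. Every item then reduces to a classical perturbation statement about symmetric matrices.

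\emph{Item 1.} This is Weyl's inequality. By the Courant--Fischer characterization, $\lambda_j(R_a+E)\le\lambda_j(R_a)+\lambda_{\max}(E)$ and $\lambda_j(R_a+E)\ge\lambda_j(R_a)+\lambda_{\min}(E)$. Both extreme eigenvalues of $E$ lie in $[-r,r]$, which gives $|\widehat\lambda_j-\lambda_j|\le r$.

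\emph{Item 2.} Apply item 1 twice. First, $\lambda_k\ge\widehat\lambda_k-r>\tau$. Second, $\lambda_{k+1}\le\widehat\lambda_{k+1}+r<\tau$. Because eigenvalues are ordered, $\lambda_1,\dots,\lambda_k>\tau$ and $\lambda_{k+1},\dots,\lambda_d<\tau$, so exactly $k$ population eigenvalues exceed $\tau$. The word ``inconclusive'' is a definitional statement about the test, so it needs no proof.

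\emph{Item 3.} I will invoke the Davis--Kahan $\sin\Theta$ theorem in its population-gap form. The usual statement bounds $\|\sin\Theta\|\le\|E\|/\delta$, where $\delta$ is the separation between the \emph{estimated} cluster and the \emph{population} complement. By Weyl (item 1), the estimated cluster eigenvalues move by at most $r$, so $\delta\ge\Delta-r$. This yields $r/(\Delta-r)$. On $\Delta>2r$ we have $\Delta-r>\Delta/2$, so $r/(\Delta-r)\le 2r/\Delta$.

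The main obstacle is making the ``matched'' cluster well defined. I will argue that $\Delta>2r$ forces the perturbed spectrum to keep the cluster separated, since the gap shrinks by at most $2r$ and so remains positive. The estimated cluster is therefore the unique set of $k$ eigenvalues within $r$ of the population cluster, and $\widehat V$ is determined. Alternatively, one can apply the Yu--Wang--Samworth variant directly.

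\emph{Item 4.} Set $m_0=\lambda_{\min}(R_a)\ge m$. By item 1, $\widehat R_a\succeq(m-r)I\succ0$, so $\log\widehat R_a$ is defined. I will use the integral representation
\[
\log A=\int_0^\infty\{(1+t)^{-1}I-(A+tI)^{-1}\}\,dt .
\]
Subtracting the representations for $\widehat R_a$ and $R_a$ gives
\[
\log\widehat R_a-\log R_a=\int_0^\infty(R_a+tI)^{-1}E(\widehat R_a+tI)^{-1}\,dt .
\]
Taking operator norms, the integrand is bounded by $r/\{(m+t)(m-r+t)\}$. The integral of this bound is
\[
r\cdot\frac1r\log\frac{m}{m-r}=\log\frac m{m-r}.
\]
The final inequality is $\log(1+x)\le x$ with $x=r/(m-r)$.

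The closing sentence about preconditioner error is a remark, not a claim, and needs no proof.
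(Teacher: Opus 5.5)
Your proposal is correct and follows the paper's proof item by item: Weyl's inequality for the eigenvalue intervals and the cutoff test, Davis--Kahan once the Weyl-shrunk gap $\Delta-r>\Delta/2$ is secured, and the integral representation of the matrix logarithm for item~4. The only difference is in item~4, where you subtract the two resolvent representations directly, while the paper integrates the derivative of the logarithm, bounded by the reciprocal spectral floor, along the segment $R_a+s(\widehat R_a-R_a)$; both routes give exactly $\log\{m/(m-r)\}$.
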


These are standard operator, subspace, and SPD-log consequences
\citep{bhatia1997matrix,yu2015useful,higham2008functions}. They show exactly
which margins a confidence-aware threshold test needs. The deployed cosine
statistic $\Psi$ is a deterministic agreement check between replicated
directions. High agreement is compatible with shared whitening bias and does
not establish accuracy or coverage.

\subsection{The restricted role of BBP calibration}

For iid Gaussian observations from the spiked covariance model with aspect
ratio $\gamma=d/n$, a population spike separates when
$\ell>1+\sqrt\gamma$, while the null sample-noise edge is
$(1+\sqrt\gamma)^2$ \citep{baik2005phase,baik2006eigenvalues,paul2007asymptotics}.
This asymptotic PCA calculation is a useful scale calibration. It is not an
exact edge for Markov draws, a finite-sample if-and-only-if statement, or an
unrestricted information limit. In particular, subthreshold alternatives can
retain nontrivial testing power \citep{onatski2013asymptotic}.

\begin{figure}[tbp]\centering
  \includegraphics[width=\linewidth]{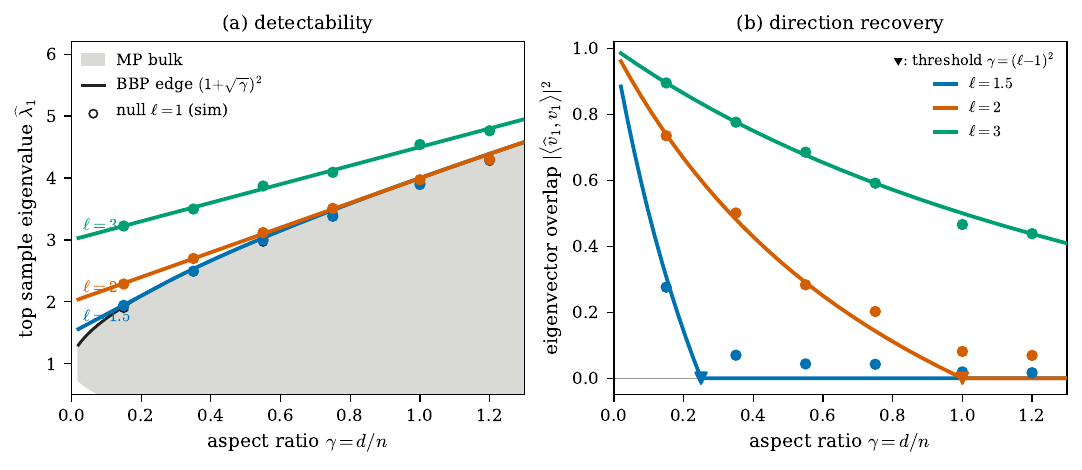}
  \caption{Iid Gaussian spiked-PCA calibration ($d=200$). The curves show the
  asymptotic sample eigenvalue and eigenvector overlap; markers are fixed-seed
  simulations. The population threshold is $1+\sqrt\gamma$, and the null sample
  edge is $(1+\sqrt\gamma)^2$. The figure is not a diagnostic threshold for
  Markov-chain draws.}
  \label{fig:bbp}
\end{figure}

\subsection{Controlled GMM at fixed marginal geometry}

The controlled target is a five-dimensional, two-component Gaussian mixture
with weights $\pi_1=0.30$, $\pi_2=0.70$,
$v=(1,1,0,0,0)^\top/\sqrt2$, and $a=21$. Its means and shared component
covariance are
\[
 \mu_1=-\pi_2\delta v,\qquad \mu_2=\pi_1\delta v,\qquad
 \Sigma_w=I_5+\beta vv^\top,
\]
where
\[
 \delta^2(\mathrm{SR})
 =\frac{\mathrm{SR}^2(1+a)}{1+\pi_1\pi_2\mathrm{SR}^2},
 \qquad \beta=a-\pi_1\pi_2\delta^2.
\]
Thus
$\Sigma_{\rm marginal}=\Sigma_w+
\pi_1\pi_2(\mu_2-\mu_1)(\mu_2-\mu_1)^\top=I_5+avv^\top$,
so the marginal diagonal-whitened top eigenvalue is $1.913$ throughout the
separation sweep. Separation is indexed by
$\mathrm{SR}=\delta/s(\delta)$, where $\delta$ is the distance between component
means and $s(\delta)=\{v^\top\Sigma_wv\}^{1/2}$ is the within-component standard
deviation along the separation direction $v$. Equation~\eqref{eq:anova}
therefore gives an exact finite-transcript partition for this target. With
latent component labels, the usual within-component plus label-mean-scatter
identity gives the corresponding population partition.

The current sweep keeps the marginal spike fixed at $1.913$ while the recorded
within/between evidence changes. The selected estimator moves increasingly
from low rank to diagonal, and an advisory population handoff appears. This
associates the estimator-selection boundary with the changing within/between
evidence rather than the fixed marginal spectrum; the recorded summaries leave
the occupation mechanism unresolved. The current-window handoff is
non-monotone. In every row, \texttt{global\_exploration} remains
\texttt{not\_established}.

Compute changes when the disagreement becomes visible: the median onset is
$5.0$ at $20{,}000$, $6.0$ at $60{,}000$, and $7.0$ at $120{,}000$ warmup
gradients. Appendix~\ref{app:gmm-boundary} gives every current $60$k row, the
single-chain contrast, and the matched ablations.

\begin{figure}[tbp]\centering
  \includegraphics[width=\linewidth]{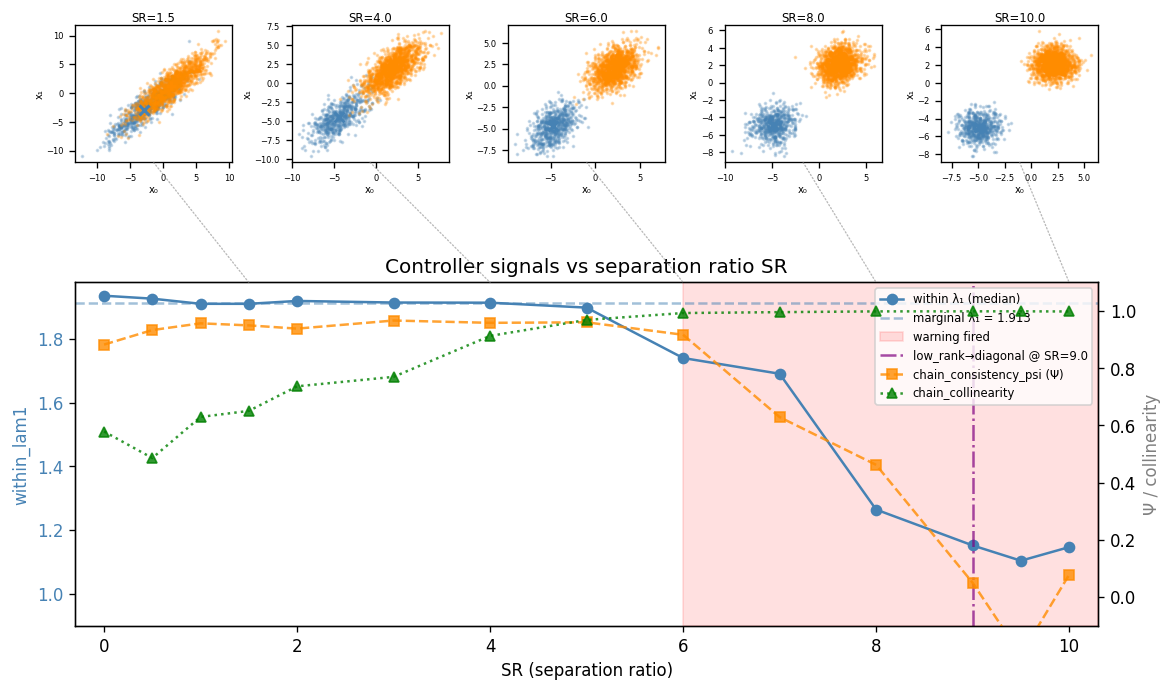}
  \caption{Current-corpus controlled GMM boundary, generated from the current
  $k=2$ main corpus. Top: target draws at selected separation ratios. Bottom:
  seed-median current diagnostics. The marginal spike remains $1.913$; the
  shading begins at the first grid point with a persistent disagreement signal
  in any seed, and the dash-dot line marks the first all-diagonal grid point.
  These are current-window signals, so a population handoff can clear.}
  \label{fig:gmm-boundary}
\end{figure}

\section{Finite-horizon local transcripts}\label{sec:transcript}

Structural evidence from visited states cannot by itself certify unseen regions.
The appropriate obstruction is finite-horizon and depends on the joint event
that every query remains local.

\begin{theorem}[Local-transcript indistinguishability]\label{thm:transcript}
Let $q_0,q_1$ be smooth, positive, integrable unnormalized targets. Target
access is only through a declared oracle, and the two targets give identical
oracle responses on an open region $A$. Run the same randomized algorithm
through a fixed horizon $T$ from the same initial law under both targets,
couple all random numbers, and assume it makes finitely many oracle calls
almost surely through $T$. Let $E_T$ be the event that every oracle query made
by the complete, possibly interacting algorithm through $T$ lies in $A$.
Induction over oracle queries then gives identical coupled transcripts on
$E_T$ and a common submeasure of mass
\[
 p_T=P_0(E_T)=P_1(E_T).
\]
If the normalized targets $\pi_0,\pi_1$ have preconditioner functionals satisfying
$d\{T(\pi_0),T(\pi_1)\}>2\epsilon$, then every common
transcript-measurable estimator $\widehat T$ satisfies
\begin{equation}
 \Pr_0\{d(\widehat T,T(\pi_0))>\epsilon\}
 +\Pr_1\{d(\widehat T,T(\pi_1))>\epsilon\}\ge p_T.            \label{eq:local}
\end{equation}
Consequently the two-point minimax error is at least $p_T/2$.
\end{theorem}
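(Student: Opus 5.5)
We prove Theorem~\ref{thm:transcript} in two stages: first a coupling statement identifying the transcripts on $E_T$, then a standard two-point (Le Cam-type) argument restricted to that common event.

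\emph{Coupling stage.} Realize both runs on one probability space: the same initial state drawn from the common initial law and the same stream $U=(U_1,U_2,\ldots)$ of random numbers. The algorithm is a sequence of measurable maps. The $j$th query $Q_j$ is a function of the initial state, the random numbers consumed so far and the previous responses $Y_1,\ldots,Y_{j-1}$. Under target $i$ the response is $Y_j^{(i)}=\mathcal O_i(Q_j^{(i)})$. We induct on $j$, with hypothesis that the first $j-1$ queries and responses coincide under $q_0$ and $q_1$. Then $Q_j^{(0)}=Q_j^{(1)}$, since it is the same function of identical inputs. On $E_T$ this common query lies in $A$, where $\mathcal O_0=\mathcal O_1$, so $Y_j^{(0)}=Y_j^{(1)}$. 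Because finitely many calls are made almost surely before horizon $T$, the induction terminates. The full transcript (queries, responses, internal randomness, outputs through $T$) therefore agrees pointwise on $E_T$ up to a null set.

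A subtle point is that $E_T$ must be defined per run, as $E_T^{(i)}$. On $E_T^{(0)}$ the two query sequences coincide, so $E_T^{(0)}\subseteq E_T^{(1)}$, and the symmetric argument gives the reverse inclusion. Hence $E_T^{(0)}=E_T^{(1)}=:E_T$ and $P_0(E_T)=P_1(E_T)=p_T$. I would also check measurability of $E_T$, which follows because it is a countable union over the number of calls $N$ of the finite intersections $\{N=n\}\cap\bigcap_{j\le n}\{Q_j\in A\}$.

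\emph{Testing stage.} Let $\widehat T$ be any transcript-measurable estimator, so $\widehat T^{(0)}=\widehat T^{(1)}$ on $E_T$ under the coupling. Define the bad events $B_i=\{d(\widehat T^{(i)},T(\pi_i))>\epsilon\}$. On $E_T$ the two estimators coincide. If both $B_0^c$ and $B_1^c$ held at some point of $E_T$, the triangle inequality would give
\[
d\{T(\pi_0),T(\pi_1)\}\le d(\widehat T,T(\pi_0))+d(\widehat T,T(\pi_1))\le 2\epsilon,
\]
contradicting the separation hypothesis. Hence $E_T\subseteq B_0\cup B_1$, and
\[
\Pr_0(B_0)+\Pr_1(B_1)\ge \Pr(B_0\cup B_1)\ge \Pr(E_T)=p_T.
\]
This is exactly \eqref{eq:local}. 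The minimax consequence follows because the larger of the two error probabilities is at least their average, which is at least $p_T/2$.

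\emph{Main obstacle.} The delicate step is the coupling induction when the algorithm is ``possibly interacting'', for example through multiple chains with shared adaptation or a random number of calls. I would handle this by serializing all oracle calls in a fixed order (ties broken by chain index), so the whole ensemble becomes a single sequential algorithm. The argument needs only that every query depends measurably on prior responses and the shared randomness; regularity of $q_0,q_1$ is not needed beyond making the oracle well defined. I would also note that identical responses for unnormalized targets suffice because the algorithm never sees the normalizing constants, whereas the functionals $T(\pi_i)$ depend on the global normalized laws. That dependence on global structure is precisely the source of the separation.
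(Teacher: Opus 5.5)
Your proposal is correct and follows essentially the same route as the paper: you run the induction over coupled oracle queries to identify the transcripts on $E_T$, then use the disjointness of the two $\epsilon$-balls (your triangle-inequality step) to get $E_T\subseteq B_0\cup B_1$ and integrate over the common mass $p_T$. Your extra care about defining $E_T$ separately for each run and checking $E_T^{(0)}=E_T^{(1)}$, measurability, and serialization of interacting queries fills in details that the paper's proof leaves implicit.
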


For replica events $E_{T,m}$, the chain rule is
\[
 p_T=\prod_{m=1}^M
 P(E_{T,m}\mid E_{T,1}\cap\cdots\cap E_{T,m-1}).
\]
Independent replicas give $p_T=\prod_m q_{T,m}$. If every displayed
conditional probability is bounded below by $q_T$, then
$p_T\ge q_T^M$. Identical independent replicas are sufficient for
$p_T=q_T^M$ when each has confinement probability $q_T$, but they are not
necessary: equality in the displayed lower bound occurs whenever every
conditional factor equals $q_T$, including some dependent constructions.
Shared adaptation keeps the joint $p_T$ form. The lower bound applies when
$M=1$ and makes multiple chains neither mathematically necessary nor
sufficient for discrimination. Genuinely overdispersed multiple chains remain
a principal practical safeguard against metastability and unvisited modes.
Longer runs, more starts, or a more diverse collection of starts can reduce the
confinement event, but need not do so at fixed compute.
No fixed sample count is privileged: its meaning is target-, kernel-, start-,
and controller-specific unless $p_T$ is bounded for a declared problem class.
This is a finite-horizon Le Cam two-point coupling argument
\citep{yu1997assouad}.

\section{Preconditioner selection and efficiency evidence}\label{sec:efficiency}

The headline NUTS benchmarks pair repeated seeded runs of a synthetic
ill-conditioned Gaussian with a real-data German-credit Bayesian
logistic-regression posterior. The latter is used as a posterior-geometry
benchmark and carries no causal interpretation. The implemented controller
selected low rank in all $12/12$ evaluated headline runs.

Table~\ref{tab:eff} reports observed pooled ESS-per-gradient performance for
automatic warmup relative to the prespecified Fisher low-rank warmup and
Welford diagonal warmup configurations. These roles and the nominal allocation
were preregistered rather than chosen after observing the results. Both pooled
baselines used $M=8$ chains, 312 warmup transitions per chain, and the same
proportional growing-window schedule. They differed in the mass-matrix
estimator: Fisher low rank versus Welford diagonal. The runs used float64
arithmetic and explicit integer seeds. Pooled accounting divides the effective
sample size computed from all eight chains by all warmup and sampling
gradients. Every ratio charges gradients actually consumed.

\begin{table}[htbp]\centering
\caption{Geometric-mean pooled ESS-per-gradient ratios under the declared
configurations. Ratio = automatic controller divided by prespecified baseline;
values greater than $1$ favor the automatic controller.}
\label{tab:eff}
\small
\begin{tabular}{@{}llc@{}}
\toprule
Model & prespecified warmup baseline & pooled ESS/gradient ratio \\
\midrule
ill-conditioned Gaussian & Fisher low-rank warmup & $2.451$ \\
ill-conditioned Gaussian & Welford diagonal warmup & $22.572$ \\
German-credit logistic regression & Fisher low-rank warmup & $1.951$ \\
German-credit logistic regression & Welford diagonal warmup & $6.264$ \\
\bottomrule
\end{tabular}
\end{table}

Across automatic, Fisher, and diagonal configurations, all $36/36$
post-sampling pooled-chain quality checks passed: ArviZ rank-normalized
split-$\Rhat$ \citep{vehtari2021rank} was finite with maximum at most $1.01$.
The runs recorded $4{,}996$ divergences during warmup and zero during
post-warmup sampling.

These fixed benchmarks test efficiency after automatic selection. The
controlled GMM sweep in Section~\ref{sec:ident} provides the complementary
selection test: within/between evidence changes while the marginal geometry is
held fixed.

Appendix~\ref{app:secondary-efficiency} reports two secondary estimands. The
marginal-amortized estimand treats one chain as the output and charges that
chain's sampling gradients plus one eighth of the jointly incurred warmup
gradients; its geometric means aggregate paired seed-by-chain ratios. The
one-output estimand also treats one chain as the output but charges all joint
warmup gradients. Its comparator is the historical single-chain
$2{,}500$-warmup policy, so that comparison is descriptive and not
budget-identical.

When $M$ chains share one step size, the $M$ acceptance statistics observed at a
single step size are replicated measurements for one controller time, not $M$
successive controller times. The shared-step comparison contrasts the current
mean-pooled bundle with revision
\nolinkurl{2f62921848a93e7dc544ba9de8e29ef177e373b6}, which applied those
statistics as successive dual-averaging updates. The frozen
historical/current ratios mean that the current bundle used
$19.38$--$19.89$ times fewer warmup gradients in the ill-conditioned runs and
$31.82$--$35.60$ times fewer for German credit. Mean pooling, the warmup
tree-depth cap, and the controller semantics changed together, so these ratios
describe the complete current-versus-historical configuration comparison.

In the other evaluated HMC-family kernels, low-rank selection occurred in
$12/12$ fixed-length and multinomial HMC runs with zero post-warmup
divergences. Multinomial HMC lost to its comparator in $2/6$ runs. Some
fixed-length ratios have near-zero manual denominators, so the individual
observed ratios are reported without a general performance summary.

\section{Open-source implementation}\label{sec:impl}

{\raggedright
BlackJAX provides the composable inference framework used here
\citep{cabezas2024blackjax}. The controller is available in the
\texttt{blackjax/adaptation/meta} module at merged revision
\nolinkurl{2103a4275b4d29b1650ba06458d5703eb7302b2e}.\footnote{\raggedright
\url{https://github.com/blackjax-devs/blackjax/tree/2103a4275b4d29b1650ba06458d5703eb7302b2e/blackjax/adaptation/meta}}
The frozen experiment corpus was executed from pre-merge revision
\nolinkurl{29d2468857be4de1644ca4470c2a4aa7f8137656}; its
\texttt{blackjax/adaptation/meta} tree is identical at the merged revision,
and the frozen manifests retain the executed revision as the immutable run
identity. The experiment code, frozen summary, checksummed outputs, and raw
sidecar are archived with the paper in Project Geodesic.\footnote{\raggedright
\url{https://github.com/arcueil/project-geodesic/tree/paper1-circulation-v1/papers/01-universal-warmup-path/experiments}}
\texttt{staged\_adaptation} is the host interface that supplies the step-size
controller, static schedule, and mass-matrix estimator.
\par}
\begin{lstlisting}[language=Python]
warmup = staged_adaptation(
    blackjax.nuts, logdensity_fn,
    metric="auto",
    max_grad_budget=50_000,
    n_chains=8,
)
(last_state, parameters), info = warmup.run(rng_key, initial_positions)
\end{lstlisting}
The public interface exposes one compute budget and the chain population;
Appendix~\ref{app:api} records the full signature.

\section{Related work}\label{sec:related}

Controlled-Markov stochastic approximation and adaptive-MCMC theory supply
ingredients for stability, containment, transient control, and diminishing
adaptation \citep{andrieu2005stability,fort2016convergence,roberts2007coupling}.
They do not by themselves establish the five-term finite-window update budget
in \eqref{eq:errors}. Markov CLTs and multivariate output analysis make LRV
uncertainty functional-specific
\citep{jones2004markov,vats2018strong}. Matrix concentration under dependence is
available under stronger boundedness and dependence hypotheses
\citep{neeman2024concentration}.

Window adaptation \citep{carpenter2017stan} estimates a constant Euclidean mass
matrix within a fixed schedule. \citet{seyboldt2026preconditioning} introduce a
Fisher-divergence estimator of a constant linear or affine preconditioner from
warmup draws and scores, with diagonal, dense, and
low-rank-plus-diagonal forms, together with their own warmup schedule. In that
method, mass-matrix adaptation and dual-averaging step-size adaptation remain
separate. Entropy-based adaptive HMC is another approach to mass-matrix
adaptation \citep{hirt2021entropy}.
\citet{hird2025preconditioning} characterize when constant linear
preconditioning can and cannot improve conditioning. Pathfinder instead builds
local Gaussian approximations along an L-BFGS path for variational inference
and initialization \citep{zhang2022pathfinder}. ChEES tunes HMC trajectory
length \citep{hoffman2021chees}, while MEADS adapts an ensemble of generalized
HMC chains \citep{hoffman2022meads}.

\citet{bourabee2025walnuts} use ``local adaptation'' for changing the leapfrog
step size within an HMC orbit in response to position-dependent curvature. Our
controller instead adapts between warmup windows and freezes one constant
Euclidean inverse mass matrix and one step size before posterior sampling; the
methods address within-orbit and between-window adaptation, respectively.

For regional exploration, sequential Monte Carlo supplies population-level
reweighting, resampling, and mutation over a sequence of bridging distributions
\citep{delmoral2006smc}; this does not make resampling alone a guarantee of
mode crossing. Tempered transitions are a direct regional/tempering precedent
\citep{neal1996tempered}.
The affine-invariant ensemble sampler is included as an example of robustness
to affine scaling, not as evidence of separated-mode exploration
\citep{goodman2010ensemble}.

Query-complexity lower bounds ask how many oracle calls are required to produce
a total-variation-accurate sample over declared smoothness classes
\citep{chewi2022query,he2025query}. The stochastic-gradient,
strongly-log-concave case gives a related oracle lower bound
\citep{chatterji2022oracle}. Those results differ from
Theorem~\ref{thm:transcript}, which lower-bounds estimation of a target
functional from a finite local transcript through its confinement mass $p_T$.

The corpus reports rank-normalized split-$\Rhat$ in the sense of
\citet{vehtari2021rank}. Nested-$\Rhat$ addresses a distinct design:
superchains share an initializer within groups and are conditionally
independent \citep{margossian2025nested}. It does not validate an arbitrary
post-hoc grouped rank-normalized statistic, and no such extension is used here.

The iid calibration uses spiked-PCA theory
\citep{baik2005phase,baik2006eigenvalues,paul2007asymptotics}, while
\citet{onatski2013asymptotic} cautions against turning the PCA transition into
an unrestricted testing impossibility. Weyl, Davis--Kahan variants, and matrix-
function perturbation results support only the conditional threshold-test
consequences stated in Theorem~\ref{thm:operator}
\citep{bhatia1997matrix,yu2015useful,higham2008functions}.

\section{Discussion}\label{sec:discussion}

The deployed verdict reports the selected estimator and effective rank; its
flags report conditioning scope, observed ensemble evidence, and advisory
handoff. The next implementation step is to calibrate its fixed diagnostic
thresholds to the theory's uncertainty objects.
Once our evidence-adequacy step establishes eligibility, the held-out criterion
of \citet{bales2019selecting} could rank candidate metrics; we did not evaluate
this composition.

The GMM boundary is compatible with the exact within/between decomposition but
does not identify a unique occupation mechanism. Its $k=3$ ablation exercises
the deployed low-rank structure and finds no general efficiency gain. The
finite-transcript result leaves both run length and starting diversity as ways
to reduce the finite-horizon confinement probability $p_T$, subject to the
actual interacting algorithm; it does not make multiple chains necessary.

Poor held-out score--position linearity directs the user toward
reparameterization; persistent within-/between-chain disagreement directs the
user toward a population or tempering method for regional exploration. Both
are actions based on the observed warmup evidence rather than diagnoses of the
posterior's generating mechanism. Any selected constant inverse mass matrix
remains available to an HMC method for within-region transitions. Fixed-$L$
kernels use their declared integration length for gradient-budget accounting.
The evaluated dimension/rank-capacity-indexed schedule keeps its decision
boundaries fixed.

\section{Conclusion}\label{sec:conclusion}

The Universal Warmup Path is an automatic warmup controller for the evaluated
HMC family. It begins diagonal, evaluates evidence for each estimator at
prescheduled endpoints, and either promotes to a supported
low-rank-plus-diagonal inverse mass matrix or retains the diagonal matrix. It
gathers another window when evidence is inconclusive. When the draws exhibit
persistent within-/between-chain disagreement, they do not support treating one
constant preconditioner as adequate across the observed regions. The controller
retains the applicable within-region matrix and advises a population or
tempering method for regional exploration. Poor held-out score--position
linearity separately motivates reparameterization. After a supported estimator
decision, the conditional convergence result bounds distance to that estimator's
population target by a contraction term plus accumulated finite-window
estimation error. Operator uncertainty and
local-transcript limits govern when that decision is available. In the
evaluated headline benchmark NUTS runs, the controller achieved higher observed
pooled ESS per gradient than both prespecified warmup baselines; all compared
runs passed the declared post-warmup quality check. Schedule performance
remained configuration-dependent. The method is scoped to between-window
selection of a constant Euclidean inverse mass matrix and step size for HMC.
The next step is calibration of estimator-selection uncertainty and integration
with a population or tempering companion when persistent regional disagreement
indicates a need for regional exploration.

\section*{Generative-AI use disclosure}

Anthropic Claude Opus 4.8 and Claude Fable 5, and OpenAI Codex and GPT-5.6,
supported literature discovery and checking, proof critique and derivation,
code generation and review, experiment orchestration and validation, and
manuscript drafting and editing. Junpeng Lao directed the work, reviewed and
revised model output, checked model-supported claims against cited primary
sources, source code, and frozen experiment outputs, and accepts responsibility
for the manuscript. No AI system is an author or a source.

\bibliography{references}

\appendix
\section{Proofs and technical details}\label{app:proofs}

\begin{proof}[Proof of Lemma~\ref{lem:latch}]
On the event $\eta_k\in C_k$, the inclusion $C_k\subset H_b$ implies
$\eta_k\in H_b$. Iterated expectation and a union bound over the fixed
potential-window set $\mathcal K$ give
$\Pr(\exists k\in\mathcal K:\eta_k\notin C_k)
\le\sum_{k\in\mathcal K}\delta_k$. This bound is taken before observing which
windows trigger an opportunity. Every realized promotion is therefore
supported on the complementary event.
\end{proof}

\begin{proof}[Proof of Proposition~\ref{prop:fibre}]
If $T_a=F_a\circ\operatorname{Cov}$, equal covariance plainly implies equal
estimator target. Conversely, if $T_a$ is constant on every covariance fibre,
define $F_a(\Sigma)$ to be $T_a(\pi)$ for any $\pi\in\mathcal P$ with
$\operatorname{Cov}_\pi(X)=\Sigma$. Fibre constancy makes this definition
independent of the representative.
\end{proof}

\begin{proof}[Proof of Proposition~\ref{prop:fisher}]
Integration by parts under the Stein boundary conditions gives
$E\{(X-\mu)s(X)^\top\}=-I$. Hence the covariance matrix of
$(X-\mu,s(X))$ is
\[
 \begin{pmatrix}\Sigma_\pi&-I\\-I&J_\pi\end{pmatrix}\succeq0.
\]
Its Schur complement gives $J_\pi\succeq\Sigma_\pi^{-1}$. The definition of
the SPD geometric mean then yields
\[
 \Sigma_\pi^{-1/2}(\Sigma_\pi\#J_\pi^{-1})\Sigma_\pi^{-1/2}
 =(\Sigma_\pi^{1/2}J_\pi\Sigma_\pi^{1/2})^{-1/2}\preceq I.
\]
Equality in the Schur complement makes
$s(X)=-\Sigma_\pi^{-1}(X-\mu)$ almost surely. Integrating this affine score
identifies the Gaussian density; the Gaussian attains equality. For the
variance-one Student-$t_\nu$ example, direct integration of the squared
location score gives
$J_\pi=\nu(\nu+1)/\{(\nu-2)(\nu+3)\}$.
\end{proof}

\begin{proof}[Proof of Theorem~\ref{thm:attractor}]
Write $d_k=\theta_k-\theta_a^\star$. Dissipativity and the growth bound
give
\[
 \|d_k+\alpha h_a(\theta_k)\|_F^2
 \le(1-2\alpha\mu+\alpha^2L_h^2)\|d_k\|_F^2.
\]
For any $z\in\overline B_{\mathcal S}(\widehat s_k,r_k^s)$, the route
inequalities give
\[
\|z-s(\theta_a^\star)\|_{\mathcal S}
\le2r_k^s+L_s\|\theta_k-\theta_a^\star\|_F<\kappa.
\]
Because $s(\theta_a^\star)$ is interior to $\mathcal S_a$, this proves
$\overline B_{\mathcal S}(\widehat s_k,r_k^s)\subset\mathcal S_a$. The
candidate update is therefore
selected without conditioning its error event on the route.
Equation~\eqref{eq:update} then gives
$\|d_{k+1}\|_F\le q\|d_k\|_F+e_k$. The basin condition closes the induction,
iteration yields \eqref{eq:contract}, and a conditional union bound gives the
stated probability.
\end{proof}

\begin{proof}[Proof of Theorem~\ref{thm:operator}]
The eigenvalue intervals are Weyl's inequality. The two strict inequalities
around $\tau$ certify exactly $k$ eigenvalues above the cutoff. Applying the
Davis--Kahan sin-$\Theta$ theorem on an event with gap exceeding $2r$ gives the
displayed conservative bound. Finally, the spectral floor keeps both matrices
SPD along the perturbation path; the integral representation of the matrix
logarithm bounds its derivative by the reciprocal spectral floor, giving
$\log\{m/(m-r)\}\le r/(m-r)$. None of these steps specifies the map from the
residual moments to an estimator's preconditioner, hence the separate $T_a$
condition.
\end{proof}

\begin{proof}[Proof of Theorem~\ref{thm:transcript}]
The initial states and randomized choices are coupled. If the first $j-1$
queries lie in $A$, their oracle responses agree under $q_0$ and $q_1$, so the
algorithm states and the $j$th query agree. Induction over all oracle calls
therefore makes the complete transcripts identical on $E_T$ and proves
the common submeasure of mass $p_T$. On that submeasure the estimator
output is the same. The two $\epsilon$-balls around $T(\pi_0)$ and $T(\pi_1)$
are disjoint, so that output must be wrong for at least one target. Integrating
over the common mass yields \eqref{eq:local}; taking the larger error gives the
minimax bound.
\end{proof}

\section{Secondary efficiency estimands}\label{app:secondary-efficiency}

The marginal-amortized estimand divides one chain's effective sample size by
that chain's sampling gradients plus one eighth of the warmup gradients shared
by the eight chains. It uses the same pooled Fisher low-rank warmup and Welford
diagonal warmup baselines described in Section~\ref{sec:efficiency}. The
one-output estimand divides one chain's effective sample size by that chain's
sampling gradients plus all shared warmup gradients. It compares with
historical single-chain versions of those estimator families and is descriptive
rather than budget-identical.

\begin{table}[htbp]\centering
\caption{Secondary geometric-mean ESS-per-gradient ratios. Ratio = automatic
controller divided by the listed comparator; values greater than $1$ favor the
automatic controller. The one-output comparison is descriptive and not
budget-identical.}
\label{tab:secondary-eff}
\small
\begin{tabular}{@{}llcc@{}}
\toprule
Model & warmup baseline family & marginal-amortized & one-output \\
\midrule
ill-conditioned & Fisher low-rank warmup & $2.324$ & $1.409$ \\
ill-conditioned & Welford diagonal warmup & $21.124$ & $14.196$ \\
German credit & Fisher low-rank warmup & $1.882$ & $1.131$ \\
German credit & Welford diagonal warmup & $5.887$ & $6.009$ \\
\bottomrule
\end{tabular}
\end{table}

\section{Controlled GMM boundary study}\label{app:gmm-boundary}

The target construction pins the marginal diagonal-whitened top eigenvalue at
$1.913$. Here $\mathrm{SR}=\delta/s(\delta)$ is the separation ratio defined in
\S\ref{sec:ident}. The main $60$k sweep is:
\begin{table}[htbp]\centering
\caption{Current $60$k $k=2$ sweep across three seeds. ``Persistent'' counts
\nolinkurl{persistent_disagreement_signal}; ``handoff'' counts the advisory
\texttt{population} endpoint.}
\label{tab:gmm-regimes}
\small
\setlength{\tabcolsep}{5pt}
\begin{tabular}{@{}rcccc@{}}
\toprule
SR & low-rank/diagonal & persistent & handoff & within $\lambda_1$ range \\
\midrule
$0$   & $3/0$ & $0/3$ & $0/3$ & $1.910$--$1.979$ \\
$0.5$ & $3/0$ & $0/3$ & $0/3$ & $1.916$--$1.932$ \\
$1$   & $3/0$ & $0/3$ & $0/3$ & $1.907$--$1.952$ \\
$1.5$ & $3/0$ & $0/3$ & $0/3$ & $1.907$--$1.962$ \\
$2$   & $3/0$ & $0/3$ & $0/3$ & $1.895$--$1.937$ \\
$3$   & $3/0$ & $0/3$ & $0/3$ & $1.882$--$1.929$ \\
$4$   & $3/0$ & $0/3$ & $0/3$ & $1.913$--$1.920$ \\
$5$   & $3/0$ & $0/3$ & $0/3$ & $1.842$--$1.907$ \\
$6$   & $3/0$ & $2/3$ & $2/3$ & $1.737$--$1.847$ \\
$7$   & $3/0$ & $3/3$ & $2/3$ & $1.371$--$1.707$ \\
$8$   & $2/1$ & $2/3$ & $2/3$ & $1.205$--$1.266$ \\
$9$   & $0/3$ & $2/3$ & $2/3$ & $1.093$--$1.297$ \\
$9.5$ & $0/3$ & $2/3$ & $2/3$ & $1.102$--$1.258$ \\
$10$  & $1/2$ & $2/3$ & $2/3$ & $1.081$--$1.219$ \\
\bottomrule
\end{tabular}
\end{table}

\paragraph{Single-chain contrast.}
At the evaluated budget and seed, one chain selects low rank at
$\mathrm{SR}=1.5$, diagonal at $\mathrm{SR}=5$, and diagonal at
$\mathrm{SR}=10$. The last row has mode-weight estimate $1.0$. Ensemble
evidence is unassessed for these single-chain rows. The outcomes are specific
to the implemented diagnostic thresholds and evaluated budget.

\paragraph{Matched diagonal ablation.}
The comparison holds warmup, step size, initialization, and $M=8$ fixed. Its
low-rank/matched-diagonal projection bulk ESS($v^\top x$) per gradient ratios
are $0.869230$, $0.997721$, $1.170288$, and $0.997752$ at
$\mathrm{SR}=1.5,4,5,9$, respectively. These four evaluated comparisons are
descriptive: identification of anisotropy alone does not establish a general
efficiency gain.

\paragraph{Three-axis main sweep.}
For $k=3$, all three seeds report persistent disagreement from the first
evaluated $\mathrm{SR}=6$ grid point onward. The endpoint population handoff is
nevertheless $2/3$ at each of those separation ratios and can clear within a
run. The persistent evidence field and current-window handoff therefore answer
different questions.

\paragraph{Three-axis matched ablation (null).}
A same-$M$ paired ablation used $k=3$ correlated axes\footnote{For
$v$ supported equally on $k$ axes, diagonal whitening gives
$\lambda_1(k)=(1+a)/(1+a/k)$. For $k=2,\ldots,5$, the marginal spikes are
$\{1.913,2.750,3.520,4.231\}$ and the off-diagonal counts are
$\{1,3,6,10\}$, respectively. For $k=2$, the diagnostic-threshold implementation's
asymptotic covariance-spike comparison is $1.913<2$, where $2$ is its
escalation threshold; this is neither a general impossibility nor a
finite-budget guarantee. At $k=3$, $2.750>2$, so that particular threshold
contrast is absent.} at
$\mathrm{SR}\in\{3.0,3.5,4.0,4.5\}$, seeds $1001$--$1016$, and $M=8$.
Both arms shared the $60$k warmup, terminal state, and step size; each then
produced $4{,}000$ draws per chain. For equal-SR, within-seed pairs, the
low-rank/matched-diagonal projection bulk ESS($v^\top x$) per gradient
geometric-mean ratio was $1.037$; the seed-cluster $t$ $95\%$ interval was
$[0.998,1.077]$. Because the interval includes $1$, the stated
measurable, non-noise go/no-go criterion fails: this is a null result and
supports no efficiency claim. Per-SR geometric means were $1.002$,
$0.999$, $0.987$, and $1.169$ in ascending SR order; the
$\mathrm{SR}=4.5$ rise is local and does not generalize. All $64/64$ runs
completed, with $16/16$ valid low-rank profiles at each SR, zero divergences,
and maximum projection split-$\Rhat$ $1.006$. Of the $64$ paired ratios,
$39$ are above $1$ and $25$ below.

\section{Schedule-comparison scope}\label{app:schedule}

\begin{figure}[tbp]\centering
  \includegraphics[width=\linewidth]{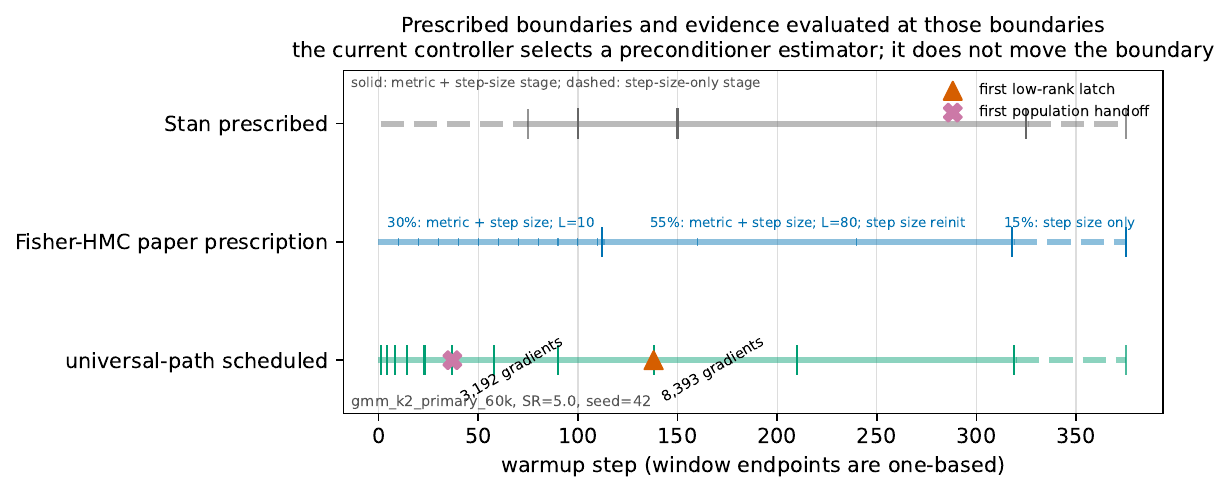}
  \caption{Schedule evidence and one executed controller trace. The Stan and
  Fisher-HMC lanes are unexecuted reference prescriptions; Fisher-HMC denotes
  the specific $30/55/15$ prescription in \citet{seyboldt2026preconditioning}.
  Controller estimator decisions are evaluated at prescribed boundaries, which
  remain fixed throughout. In the displayed trace the first
  population-handoff marker is transient and clears.}
  \label{fig:schedule}
\end{figure}

\paragraph{Full factorial.}
The validated schedule-by-buffer experiment records poor radon performance for
the \emph{bundle} combining proportional-growing windows with an
accumulating/per-draw-recompute buffer. Because accumulation and recomputation
move together in this arm, the observation estimates performance of that
complete bundle. The remaining configurations show the joint performance of
schedule, buffer, and controller semantics. The controller boundaries in
Figure~\ref{fig:schedule} are prescribed before execution and remain fixed
through the run.

\paragraph{Restart ablation.}
Continuous dual averaging has mixed efficiency: it wins $3/6$ evaluated runs, with
continuous/reseed ratios from $0.9464$ to $1.1385$ and geometric means $1.014$
for the ill-conditioned target and $1.024$ for German credit. It reduces the
recorded oscillation and number of warmup divergences; these are the observed
configuration-level results of the limited reseed ablation.

\clearpage
\section{Full warmup signature}\label{app:api}

\begin{lstlisting}[language=Python,basicstyle=\ttfamily\footnotesize,
                   aboveskip=2pt,belowskip=2pt]
staged_adaptation(
  algorithm, logdensity_fn, metric="welford_diag" | MetricRecipe | MetricCore,
  *, max_grad_budget=None, n_chains=1, imm_shrinkage_to_previous=0.0,
  initial_inverse_mass_matrix=None, initial_step_size=1.0,
  target_acceptance_rate=0.80, adaptation_info_fn=..., integrator=velocity_verlet,
  schedule_fn=None, initial_metric_state=None, **extra_parameters)
\end{lstlisting}
{\small Step-size dual averaging and the stage schedule live in the host;
mass-matrix estimation is delegated to \texttt{metric}.\par}

\end{document}